\newtheorem{theorem}{Theorem}
\newtheorem{remark}{Remark}
\def\BibTeX{{\rm B\kern-.05em{\sc i\kern-.025em b}\kern-.08em
		T\kern-.1667em\lower.7ex\hbox{E}\kern-.125emX}}
\begin{document}

	\title{Movable Antenna Enabled Symbiotic Radio Systems: An Opportunity for Mutualism}
	
		\author{\IEEEauthorblockN{Chao Zhou,~\IEEEmembership{Student Member,~IEEE},~Bin Lyu,~\IEEEmembership{Senior Member,~IEEE},~Changsheng You,~\IEEEmembership{Member,~IEEE},\\and~Ziwei Liu}

		\IEEEcompsocitemizethanks{\IEEEcompsocthanksitem C. Zhou,  B. Lyu, and Z. Liu are with the {School of Communications and Information Engineering}, Nanjing University of Posts and Telecommunications, Nanjing 210003, China (email: zoe961992059@163.com, blyu@njupt.edu.cn, liuziwei89@163.com).  
		C. You is with the Department of Electrical and Electronic
		Engineering, Southern University of Science and Technology (SUSTech),
		Shenzhen 518055, China (e-mail: youcs@sustech.edu.cn).
		%Y.-C. Liang is with the Center for Intelligent Networking and Communications (CINC), University of Electronic Science and Technology of China (UESTC), Chengdu 611731, China (e-mail: liangyc@ieee.org).
		%D. T. Hoang is with School of Electrical and Data Engineering, University of Technology Sydney, Sydney, NSW 2007, Australia (email: hoang.dinh@uts.edu.au). 	
		}
		%\IEEEauthorblockA{\IEEEauthorrefmark{1} Nanjing University of Posts and Telecommunications, Nanjing 210003, China}
		%\IEEEauthorblockA{\IEEEauthorrefmark{2}  University of Technology Sydney, Sydney, NSW 2007, Australia}
		% \IEEEauthorblockA{\IEEEauthorrefmark{3}  Sun Yat-sen University, Guangzhou 510275,  China}
	}
	
	\markboth{}%
	{Shell \MakeLowercase{\textit{et al.}}: A Sample Article Using IEEEtran.cls for IEEE Journals}

	\maketitle
	
	\begin{abstract}

	In this letter, we propose a new movable antenna (MA) enabled symbiotic radio (SR) system that leverages the movement of MAs to maximize both the primary and secondary rates, thereby promoting their mutualism. Specifically, the primary transmitter (PT) equipped with MAs utilizes a maximum ratio transmission (MRT) beamforming scheme to ensure the highest primary rate at the primary user (PU). Concurrently, the backscatter device (BD) establishes the secondary transmission by overlaying onto the primary signal. The utilization of MAs aims to enhance the secondary rate by optimizing the positions of MAs to improve the beam gain at the BD. Accordingly, the beam gains for both MA and fixed-position antenna (FPA) scenarios are analyzed, confirming the effectiveness of the MA scheme in achieving the highest primary and secondary rates. Numerical results verity the superiority of our proposed MA enabled scheme.

	\end{abstract}
	
	\begin{IEEEkeywords}
		Symbiotic radio, movable antenna, mutualism, beam gain.
	\end{IEEEkeywords}

	\section{Introduction}

	\IEEEPARstart{T}{he} evolution of wireless communication networks has considerably contributed to the proliferation of the Internet of Things (IoT), and it is expected that the quantity of wireless IoT devices will escalate to 5000 billion by 2030~\cite{IoT_Num}. Along with the increasing automation and intelligence of IoT networks, there are two urgent issues that need to be addressed. On one hand, the data transmissions of vast IoT devices require a significant amount of power consumption. On the other hand, the ubiquitous connectivity of IoT devices will result in substantial occupation of spectrum resources, thereby exacerbating the problem of spectrum scarcity. The challenges posed by energy consumption and spectrum occupancy inevitably limit the future development of IoT networks, thus underscoring the need for enhancing energy-efficiency and spectrum utilization in IoT networks.
		
	Fortunately, a passive IoT communication paradigm, known as \emph{symbiotic radio} (SR), has been proposed to support energy- and spectrum-efficient IoT transmission~\cite{SR_TCCN,SR_Li}. Specifically, the IoT transmission in SR systems, also referred to as secondary transmission, is achieved by overlaying onto the primary transmission, thereby effectively reusing the primary radio frequency (RF) source and spectrum band. Driven by this superior advantage, a majority of research has focused on advancing SR systems to achieve highly effective primary and secondary transmissions~\cite{SR_BD,SR_Zeng,Multi-Acc_Liang,PSR_Yang,PSR_Zeng}. 
	In~\cite{SR_BD}, two practical setups, known as commensal SR (CSR) and parasitic SR (PSR), were introduced. Under these setups, the maximization of weighted sum rate and the minimization of transmit power were investigated. In~\cite{SR_Zeng}, the deployment of massive backscatter devices (BDs) was proposed to operate in the CSR setup, aiming to improve the performance of primary transmission by establishing additional reflection links.  Furthermore, under the CSR setup, two multiple access schemes with a large number of IoT devices were designed in~\cite{Multi-Acc_Liang} to ensure the reliability and validity of primary transmission. In~\cite{PSR_Yang}, the millimeter wave SR system was studied under the PSR setup, revealing the trade-off between primary and secondary transmissions. Moreover, the cell-free enabled SR system was proposed in~\cite{PSR_Zeng}, where the achievable rate-region of primary and secondary transmissions was investigated, confirming the competitive relationship between primary and secondary transmissions in the PSR setup.
	However, in the CSR setup, prioritizing the enhancement of primary transmission performance may affect the secondary transmission performance. Additionally, in the PSR setup, the presence of secondary transmission has a negative impact on the primary transmission performance. As a result, there exists a trade-off between primary and secondary transmissions~\cite{SR_BD}, while existing works fail to meet both primary and secondary requirements. Thus, how to guarantee the quality of service (QoS) requirements of both primary and secondary transmissions is still an open problem.

	Recently, \emph{movable antennas} (MAs) have received growing research attention due to its superior exploitation of spatial diversity~\citen{MA_mag,You_TransceiverDesign}. In~\cite{MA_Capacity}, the potential of MAs in enhancing channel capacity through optimizing the positions of antennas was demonstrated. Additionally, MAs provide an advanced capability to enhance the desired signal power and suppress undesired interferences, presenting new opportunities for secure transmission~\cite{Secure_MA}.
	 Furthermore, the movement of MAs can improve wireless channel conditions for multi-beam forming~\cite{MA_MultiBeam}. Specifically, as indicated in~\cite{MA_MultiBeam}, compared to fixed-position antenna (FPA) arrays, MA can achieve the maximum beamforming gain at different directions, which inspires us to utilize MAs in SR systems for achieving satisfactory primary and secondary transmission performance simultaneously.

	In this letter, we propose an MA enabled SR system to establish the mutually beneficial coexistence of both primary and secondary transmissions. Specifically, a primary transmitter (PT) equipped with a linear MA array is deployed to transmit primary information to the primary user (PU) with one single fixed-position antenna. The BD, with one single fixed-position antenna, delivers its own information to the PU by piggybacking on the primary transmission. Under this setup, the positions of MAs are optimized at the PT to achieve the maximum rates for both primary and secondary transmissions. Furthermore, we analyze the beam gains for both MA and FPA scenarios and confirm the superiority of the MA scheme. Numerical results verify the effectiveness of our proposed MA scheme in achieving mutualism.
	%Additionally, one optimal position of MAs is provided. 

	\section{System Model}
	
	As illustrated in Fig. \ref{System_model}, in the proposed MA enabled SR system, there exists a PT with $N$ MAs, a single-antenna~BD, and a single-antenna PU. We consider the CSR setup. Specifically, the PT delivers primary signals to the PU with the involvement of the BD. At the same time, the BD conveys its secondary information to the PU via overlaying onto the primary RF signals~\cite{SR_BD}. The positions of antennas at both the BD and the PU are fixed, while those of the PT are movable within a linear array. We consider a two-dimensional coordinate system, with the linear array positioned along the x-axis. Without loss of generality, the positions of MAs can be represented as $\bm x = [x_1,\ldots,x_n,\ldots,x_N]^T$, where $1\le n \le N$.	 
   \subsection{Channel Model} 
	
	\begin{figure}
		\centering
		\includegraphics[width=0.7\linewidth]{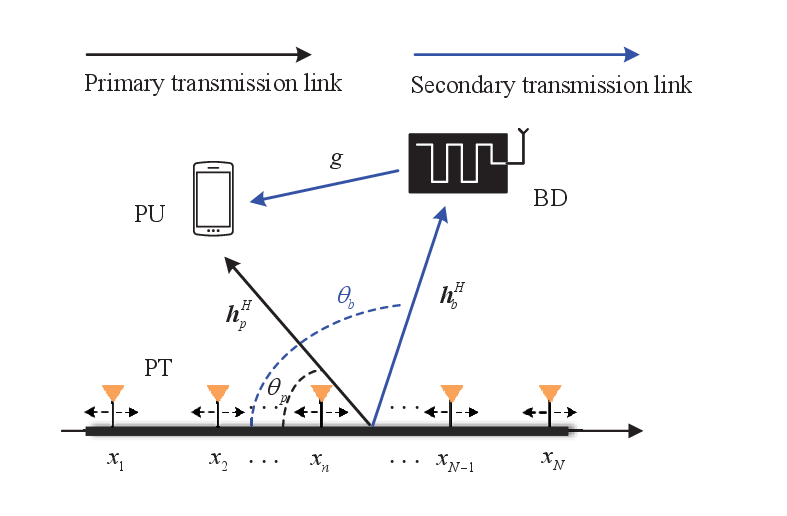}
		\caption{The linear MA array enabled SR systems.}
		\label{System_model}
	\end{figure}
	
	We consider the  quasi-static free-space line of sight (LoS) channel model for low-mobility scenarios~\cite{farfield_respose}, based on which the channel state information (CSI) remains unchanged with the positions of MAs being fixed.\footnote{Taking the smart factory as an example. Since the positions of the PT, PU (i.e., robot), and BD are fixed before the robot starts working, the related CSI between them is mainly controlled by the LoS link, which is thus unchanged. Under this setup, we can adjust the locations of MAs in accordance with the specific robot for designing the transmit beamforming vector.} We respectively denote the array response vectors of the PT with respect to (w.r.t.) $\theta_p$  and $\theta_b$ as
	\begin{align}
	 \bm \alpha_p \left(\bm x,\theta_p \right) = \left[ e^{-j\frac{2\pi}{\lambda} x_1 \cos{\theta_p}},\ldots,e^{-j\frac{2\pi}{\lambda} x_N \cos{\theta_p}}\right] ^T,
	  	\end{align}
  	\begin{align}
	\bm \alpha_b \left(\bm x,\theta_b \right) = \left[ e^{-j\frac{2\pi}{\lambda} x_1 \cos{\theta_b}},\ldots,e^{-j\frac{2\pi}{\lambda} x_N \cos{\theta_b}}\right] ^T,
	\end{align}
	where $\theta_p \in \left(0, \pi \right) $ and  $\theta_b \in \left(0, \pi \right)$ are  the steering angles of the PU and the BD w.r.t. the PT, respectively.
	Consequently, the channel vectors from the PT to the PU and  BD can be respectively represented as
	\begin{align}\label{PU_channel}
			\bm h_p^H& =  {\beta_p } e^{-j \frac{2\pi}{\lambda} d_p} \bm \alpha_ p^H \left(\bm x,\theta_p \right), \\
		    \bm h_b^H& =  {\beta_b  }  e^{-j \frac{2\pi}{\lambda} d_b}\bm \alpha_b^H \left(\bm x,\theta_b \right), 
	\end{align}
	where $ \beta_p e^{-j \frac{2\pi}{\lambda} d_p}$ and $ \beta_b e^{-j \frac{2\pi}{\lambda} d_b}$ respectively represent the complex path gains from the PT to the PU and the BD with $\beta_p=\frac{\lambda}{4\pi d_p} $ and $\beta_b=\frac{\lambda}{4\pi d_b} $, $\lambda$ represents the carrier wavelength, $d_p$ and $d_b$ are the distances from the PU and the BD to the origin, respectively.\footnote{{In this paper, we consider the far-field channel model as the aperture of MAs is much smaller than the distance for signal transmission~\cite{MA_Capacity,MA_MultiBeam,Secure_MA}, while the more accurate near-field channel model~\cite{farfield_respose} will be considered in our future work.} According to this model, both the angle between each MA and the PU/BD and the path gain remain constant.} 
	Similarly, the channel gain between the BD and the PU is expressed as %$g = \frac{\lambda}{4\pi d_{s}} e^{-j \frac{2\pi}{\lambda} d_s}$,
	\begin{align}
	     g = \frac{\lambda}{4\pi d_{s}} e^{-j \frac{2\pi}{\lambda} d_s}, 
	\end{align}
	where $d_s$ represents the related distance. Different from the FPA scenario, the spatial angle of the array response vector at the PT can be flexibly manipulated in the MA scenario by adjusting the positions of antennas for achieving higher beam gain.
	
	%the array response vector the PT i.e., $\bm \alpha\left( \bm x, \theta\right) $, in the MA scenario can be manipulated by adjusting the positions of antennas for achieving higher beam gain at specific angle $\theta$.

	\subsection{Transmission Model}

	Denote the primary transmission signal and secondary transmission signal as $s\left( l\right) \sim \mathcal{CN}\left( 0,1\right)  $ and $c \sim \mathcal{CN}\left( 0,1\right) $, respectively. As the CSR setup is adopted, during each secondary symbol period, $L$ primary symbols can be transmitted by the PT, where $L \gg 1$. Accordingly, the $l$-th received signal at the PU can be expressed as
	\begin{align}\label{received_signal}
		y\left(l \right)  =\bm h_p^H \bm w s\left( l\right)  + g \bm h_b^H \bm w c s\left( l\right) + z(l),
	\end{align} 
	where $\bm w$ is the transmit beamforming vector at the PT, and its maximum allowable transmit power is $P_t$, i.e., $\left|\left|  \bm w \right|\right|  ^2 \le P_t$.
	$z(l) \sim \mathcal{CN}\left( 0,\sigma^2\right) $ is the additive white Gaussian noise (AWGN). At the PU, the primary signal $s\left( l\right)$ is first decoded, followed by the decoding of the secondary signal $c$.
	 Consequently, the signal to noise ratio (SNR) at the PU for decoding  $s\left( l\right)$, denoted as $\gamma_p(c)$, is expressed as $ \frac{|\bm h_p^H \bm w +c g \bm h_b^H \bm w  |^2}{\sigma^2}$. Then, the average SNR at the PU for decoding $s\left( l\right)$ is obtained as
	\begin{align}\label{Average_SNR}
		\overline{\gamma}_p &= \mathbb{E}_c \left[\frac{|\bm h_p^H \bm w +c g \bm h_b^H \bm w  |^2}{\sigma^2} \right]  \nonumber \\
		 &\overset{(a)}{=} \frac{|\bm h_p^H \bm w|^2 + | g \bm h_b^H \bm w  |^2}{\sigma^2}, 
	\end{align} 
	where $(a)$ holds due to the zero mean and unit variance of $c$. According to~\eqref{Average_SNR},  the corresponding primary rate is
%	\begin{align}\label{Average_achevable_rate}
%		 R_p& = \log_2  \left( 1+ \frac{   \overset{p_1}{\overbrace{|\bm h_p^H \bm w|^2 }}  + \overset{{p_2}}{\overbrace{| g \bm h_b^H \bm w  |^2}}}{\sigma^2} \right),
%	\end{align} 
%	where $p_1$ in~\eqref{Average_achevable_rate} represents the signal power of the primary transmission link and $p_2$ is that of the secondary transmission link. Due to the double-fading effect of the secondary transmission link, the channel gain of the secondary transmission link (i.e., $ \left|\left| g \bm h_b^H\right| \right|^2  $) is significantly lower than that of the primary transmission link (i.e., $ \left|\left|  \bm h_p^H\right| \right|^2  $). Specifically, we have $ \frac{\left|\left| g \bm h_b^H\right| \right|^2}{\left|\left|  \bm h_p^H\right| \right|^2 }  = \left( \frac{\lambda d_p}{4\pi d_s d_b}  \right) ^2 < 10^{-4}$ with the simulation parameters provided in~\eqref{section-IV}. Therefore, the fact that $p_1 \gg p_2$ holds, and the primary rate can be approximated as
\begin{align}\label{Average_achevable_rate}
	R_p& = \log_2  \left( 1+ \frac{  {|\bm h_p^H \bm w|^2 }  +{| g \bm h_b^H \bm w  |^2}}{\sigma^2} \right),
\end{align} 
where $|\bm h_p^H \bm w|^2$ in~\eqref{Average_achevable_rate} represents the signal power of the primary transmission link and $| g \bm h_b^H \bm w  |^2$ is that of the secondary transmission link. Due to the double-fading effect of the secondary transmission link, the channel gain of the secondary transmission link (i.e., $ \left|\left| g \bm h_b^H\right| \right|^2  $) is significantly lower than that of the primary transmission link (i.e., $ \left|\left|  \bm h_p^H\right| \right|^2  $). Specifically, we have $ \frac{\left|\left| g \bm h_b^H\right| \right|^2}{\left|\left|  \bm h_p^H\right| \right|^2 }  = \left( \frac{\lambda d_p}{4\pi d_s d_b}  \right) ^2 < 10^{-4}$ with the simulation parameters provided in~\eqref{section-IV}. Therefore, the fact that $|\bm h_p^H \bm w|^2 \gg | g \bm h_b^H \bm w  |^2$ holds, and the primary rate can be approximated as
	\begin{align}\label{R_p}
		 R_p \approx \log_2 \left( 1+ { {{|\bm h_p^H \bm w|^2 }}} /{\sigma^2} \right). 
	\end{align}
	\begin{remark}\label{remark1}
		Although the existence of secondary transmission in~\eqref{Average_achevable_rate} can provide an additional link for the primary transmission, it does not consistently improve the primary rate. On the contrary, it may even lead to a deterioration in the primary transmission performance. This is because the design of transmit beamforming  needs to take into account the QoS requirements for both primary and secondary transmissions, i.e., there exists a  tradeoff   between the performance of primary and secondary transmissions~\cite{SR_BD}. In particular, achieving satisfactory secondary transmission performance may come at the expense of sacrificing the primary transmission performance. Furthermore, the deterioration in the primary transmission becomes pronounced when there is  low channel correlation between $\bm h_p^H$ and $\bm h_b^H$.
	\end{remark}
   In order to achieve the symbiosis in SR systems, i.e., both primary and secondary transmissions achieve their optimal performance, 
	 we propose to utilize the MAs at the PT to enhance the correlation between $\bm h_p^H$ and $\bm h_b^H$, aiming to ensure the optimal primary transmission performance and increase the secondary transmission performance. 
   	To accomplish this goal, the maximum ratio transmission (MRT) scheme is adopted for designing the transmit beamforming at the PT, which can be expressed as
	\begin{align}\label{MRT}
		\bm w \left( \bm x,\theta_p \right)    =\sqrt{P_t/N}  \bm \alpha_ p \left(\bm x,\theta_p \right).
	\end{align}
	\eqref{MRT} guarantees the optimal primary transmission performance as defined in~\eqref{R_p}. After obtaining $s\left( l\right)$, the successive interference cancellation (SIC) technique is employed at the PU to eliminate the interference from the primary transmission link, i.e., $\bm h_p^H \bm w s\left( l\right)$ in \eqref{received_signal}. Subsequently, the maximal ratio combining (MRC) technique is employed for decoding $c$, and the corresponding secondary rate is expressed as
	\begin{align}\label{SU_achevable_rate}
		R_c = \frac{1}{L} \log_2( 1 + \frac{L \left| g\right|^2  |  \bm h_b^H \bm w \left( \bm x,\theta_p \right)  |^2}{\sigma^2}),
	\end{align}
	{which is related to both the transmit beamforming at the PT and the positions of the MAs.  In the FPA scenario, with the designed transmit beamforming vector $\bm w \left( \bm x,\theta_p \right)$ in~\eqref{MRT}, the secondary rate remains consistently low as a constant.} However, in the MA scenario,   the positions of MAs (i.e., $\bm x$) can be optimized to  achieve mutually beneficial coexistence for the primary and secondary transmissions, hence providing an opportunity for symbiosis.

%   {\color{blue}\begin{theorem}\label{Theorem1}
%   		The maximization of the secondary rate $R_c$ with the designed transmit beamforming vector $\bm w \left( \bm x,\theta_p \right)$ in \eqref{MRT} does not affect the optimality of the primary rate $R_p$.
%   \end{theorem}
%	\begin{proof}\label{proof}
%		By substituting $\bm w \left(\bm x, \theta_p \right) $ into the first term of equation \eqref{Average_achevable_rate}, we obtain
%		\begin{align}
%			   |\bm h_p^H \bm w \left(\bm x, \theta_p \right)|^2 &=\left|  \frac{1}{N}\sqrt{P_t}\beta_p \bm \alpha_ p^H \left(\bm x,\theta_p \right) \bm \alpha_ p \left(\bm x,\theta_p \right)\right| ^2  \nonumber  \\  
%			   &=  \left( \sqrt{P_t}\beta_p\right) ^2,
%		\end{align}
%		which is a constant, and the second term have the same form as shown in equation \eqref{SU_achevable_rate}. 
%		Thus, \textbf{Theorem}~\ref{Theorem1} holds true. 
%	\end{proof}
%	\begin{remark}
%		\textbf{Theorem}~\ref{Theorem1} demonstrates that the designed transmit beamforming vector $\bm w \left( \bm x,\theta_p \right)$ enables the primary and secondary transmissions to achieve mutually beneficial coexistence in the MA scenario, aligning with the concept of symbiosis. 
%	\end{remark}
%	In the FPA scenario, with the designed $\bm w \left( \bm x,\theta_p \right)$, the secondary  rate  remains consistently low as a constant. However, in the MA scenario, variations in the positions of MAs (i.e., $\bm x$) provide opportunities for enhancing both primary and secondary  rates, ultimately leading to mutualistic outcomes.}

	\section{Problem Formulation and  Proposed Solution}\label{section3}
	
	{In this section, our objective is to maximize the secondary rate defined in~\eqref{SU_achevable_rate} through optimizing the transmit beamforming vector and the positions of the MAs. As $R_c$ increases monotonically with $  \left| g \right|^2 \left| \bm h_b^H \bm w \left( \bm x,\theta_p \right) \right| ^2 $, incorporating~\eqref{MRT} into~\eqref{SU_achevable_rate}, the goal of maximizing the secondary rate is equivalent to maximizing the beam gain at angle $\theta_b$, i.e.,  $ \left| \bm \alpha_b^H \left(\bm x,\theta_b \right)  \bm \alpha_p \left(\bm x,\theta_p \right) \right|  $, which is solely related to $\bm x$.} Consequently, we formulate the optimization problem as follows
	 \begin{equation}\tag{$\textbf{P1}$} 
	 	\begin{aligned}
	 		\max_{\bm{x}}  &~~ \kappa \left(\bm x, \theta_p, \theta_b \right)   \\ 
	 		\text{s.t.}~~&|x_i - x_j| \ge d_{\text{min}},~ 1\le i \neq j \le N,
	 	\end{aligned}
	 \end{equation}
     {where  $\kappa \left(\bm x, \theta_p, \theta_b \right) =  \left| \bm \alpha_b^H \left(\bm x,\theta_b \right)  \bm \alpha_p \left(\bm x,\theta_p \right) \right| $ represents the beam gain of the PT at angle $\theta_b$ with the designed transmit beamforming vector $\bm w \left(\bm x,\theta_p \right)$,}  $d_{\text{min}} = \frac{\lambda}{2}$ indicates the minimum distance between arbitrary two MAs for avoiding the coupled effect~\cite{MA_mag,MA_Capacity,Secure_MA,MA_MultiBeam}.  
    The non-convex constraint makes it challenging to solve (\textbf{P1}). Before addressing this challenge, we first analyze the objective function of (\textbf{P1}) (i.e., beam gain at angle $\theta_b$) for the MA scenario, and subsequently present the corresponding optimal position. Furthermore, we delve into the objective function of (\textbf{P1}) for the FPA scenario to show the superiority of the proposed MA scheme.
     
	\subsection{Beam gain in the MA scenario}\label{Section3-A}
%	Considering the limitations of the FPA scenario, we propose an MA scheme to overcome this issue and achieve the goal of symbiosis. 
%	For the MA scenario, the representation of beam gain at angle $\theta_b$ can be expanded as follows
	In the MA scenario, for two distinct angles $\theta_p$ and $\theta_b$,\footnote{We omit the case $ \cos{\theta_p}=\cos{\theta_b} $, in which the steering vector of $\theta_p$ is identical to that of $\theta_b$.} we expand the expression of $  \kappa\left(\bm x,\theta_p,\theta_b  \right) $ as follows
	\begin{align}\label{MA_beamgain}
		&\kappa\left(\bm x,\theta_p,\theta_b  \right) 
		=\left| \bm \alpha_b^H \left(\bm x,\theta_b \right)  \bm \alpha_p \left(\bm x,\theta_p \right) \right|  \nonumber  \\
		& =\left| \bm \alpha_p^H \left(\bm x,\theta_p \right)  \bm \alpha_b \left(\bm x,\theta_b \right) \right|  
		= \left|\sum_{n=1}^{N} e^{j\frac{2 \pi}{\lambda} x_n \Delta\left(\theta_p,\theta_b \right)  } \right|,
		%	&=\left| \sum_{n=1}^{N} e^{-j\frac{2\pi}{\lambda} x_n \cos{\theta_b} }  e^{j\frac{2\pi}{\lambda} x_n \cos{\theta_p}}  \right| 
		%	= \left|\sum_{n=1}^{N} e^{j n\pi \Delta\left(\theta_p,\theta_b \right)  } \right|  \nonumber 
	\end{align}
%	\begin{align}
%		\kappa \left( \bm x,\theta_p, \theta_b  \right)  =  \left|\sum_{n=1}^{N} e^{j\frac{2 \pi}{\lambda} x_n \Delta\left(\theta_p,\theta_b \right)  } \right|.
%	\end{align}
	where  $ \Delta\left(\theta_p,\theta_b \right)  = \cos{\theta_p}-\cos{\theta_b} $. As $\kappa\left(\bm x,\theta_p,\theta_b  \right)$ is an even function w.r.t. $  \Delta\left(\theta_p,\theta_b \right) $, it is reasonable to investigate the case that $  \Delta\left(\theta_p,\theta_b \right) >0$, i.e., $\cos{\theta_p}>\cos{\theta_b} $. As the phase difference between $\theta_p$ and $\theta_b$ increases, $\Delta\left(\theta_p,\theta_b \right)$ gradually approaches 2. Therefore, we have $  \Delta\left(\theta_p,\theta_b \right)  \in (0,2) $. To obtain the maximum value of $\kappa\left(\bm x,\theta_p,\theta_b  \right) $, the optimal positions of MAs are presented in \textbf{Theorem~\ref{Theorem3}}.
	%In the FPA scenario, the beam gain remains constant at fixed angles $\theta_p$ and $\theta_b$ with the pre-designed transmit beamforming vector. However, in the MA scenario, to achieve the maximum value of  $ \kappa \left( \bm x,\theta_p, \theta_b  \right) $, the positions of MAs (i.e., $ \bm x  $) should satisfy
%	\begin{align}\label{optimal_condiation}
%		\left( x_{n+1} -x_n\right)  \frac{\Delta\left(\theta_p,\theta_b \right)}{\lambda} = k,~ 1\le n \le N-1,
%	\end{align}
%	where $k$ is positive integer. Thanks to the movable characteristic of MAs, the maximum beam gain at angle $\theta_b$ can be attained. 
	\begin{theorem}\label{Theorem3}
		For the given $ \Delta\left(\theta_p,\theta_b \right) $, one optimal solution to problem (\textbf{P1}) is
		\begin{align}\label{optimal_MA}
			x_n =\left(  n-1 \right) \frac{\lambda}{ \Delta\left(\theta_p,\theta_b \right)},~1\le n \le N.
		\end{align}
	\end{theorem}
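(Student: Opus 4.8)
The plan is to show that the proposed positions attain the global maximum of the objective while remaining feasible for (\textbf{P1}). First I would bound the objective from above. Since each summand in~\eqref{MA_beamgain} has unit modulus, the triangle inequality yields
\begin{align}
\kappa\left(\bm x,\theta_p,\theta_b\right)
= \left|\sum_{n=1}^{N} e^{j\frac{2\pi}{\lambda} x_n \Delta\left(\theta_p,\theta_b\right)}\right|
\le \sum_{n=1}^{N} \left| e^{j\frac{2\pi}{\lambda} x_n \Delta\left(\theta_p,\theta_b\right)} \right| = N,
\end{align}
which holds for \emph{every} feasible $\bm x$. Because this upper bound is independent of the antenna positions, it suffices to exhibit a single feasible $\bm x$ attaining $\kappa\left(\bm x,\theta_p,\theta_b\right) = N$.

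Next I would characterize when equality holds. The triangle inequality above is tight precisely when all the complex exponentials are co-phased, i.e., when $\frac{2\pi}{\lambda} x_n \Delta\left(\theta_p,\theta_b\right)$ is constant modulo $2\pi$ across $n$. Substituting the candidate solution $x_n = (n-1)\frac{\lambda}{\Delta\left(\theta_p,\theta_b\right)}$ gives $\frac{2\pi}{\lambda} x_n \Delta\left(\theta_p,\theta_b\right) = 2\pi(n-1)$, so every summand equals $e^{j2\pi(n-1)} = 1$ and the sum equals $N$. Hence the candidate achieves the upper bound and is therefore a global maximizer of the objective.

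Finally I would verify feasibility against the minimum-distance constraint. Since $x_n$ is strictly increasing in $n$, the smallest pairwise separation is attained between consecutive antennas, namely $x_{n+1} - x_n = \frac{\lambda}{\Delta\left(\theta_p,\theta_b\right)}$. Invoking $\Delta\left(\theta_p,\theta_b\right) \in (0,2)$, which was established just before the theorem statement, we obtain $\frac{\lambda}{\Delta\left(\theta_p,\theta_b\right)} > \frac{\lambda}{2} = d_{\text{min}}$, so $|x_i - x_j| \ge d_{\text{min}}$ holds for all $i \neq j$. The candidate is thus both feasible and optimal, which completes the argument.

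The step I expect to be the crux is this final feasibility check: the entire argument hinges on $\Delta\left(\theta_p,\theta_b\right) < 2$, which is what guarantees that the aperture demanded by co-phasing remains compatible with the half-wavelength spacing requirement. The upper-bounding and equality-characterization steps are routine applications of the triangle inequality and should present no difficulty.
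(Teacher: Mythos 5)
Your proof is correct and takes essentially the same route as the paper: substituting the candidate positions co-phases all $N$ unit-modulus exponentials so the sum attains magnitude $N$. Your write-up is in fact slightly more complete than the paper's own proof, which only carries out the substitution computation and defers the triangle-inequality upper bound to the proof of \textbf{Theorem~2} and the feasibility check $\frac{\lambda}{\Delta\left(\theta_p,\theta_b\right)} > d_{\text{min}}$ to Remark~2 --- both steps you supply explicitly.
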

	\begin{proof}
		With the solution provided in \eqref{optimal_MA}, we can express $x_{n+1} =x_n + \frac{\lambda}{ \Delta\left(\theta_p,\theta_b \right)}$, where $1\le n \le N$. Accordingly, the expression for $x_n$ can be reformulated as	
		\begin{align}\label{x_n}
			x_n = x_1 + (n-1)\frac{\lambda}{ \Delta\left(\theta_p,\theta_b \right)}, ~1\le n \le N.
		\end{align} 
		Substituting~\eqref{x_n} into~\eqref{MA_beamgain}, $\kappa\left(\bm x,\theta_p,\theta_b  \right) $ can be computed by
		\begin{align}\label{Eq15}
			\kappa\left(\bm x,\theta_p,\theta_b  \right)  &= \left|\sum_{n=1}^{N} e^{j\frac{2 \pi \Delta\left(\theta_p,\theta_b \right) }{\lambda}\left(   x_1 + (n-1)\frac{\lambda}{ \Delta\left(\theta_p,\theta_b \right)}\right)   } \right|  \nonumber\\
			& =\left| N e^{j\frac{2 \pi \Delta\left(\theta_p,\theta_b \right) }{\lambda}  x_1    } \right| = N.
		\end{align} 
	Without loss of generality, let $ x_1 $ be set to 0, thus completing the proof.
	 %Therefore, \textbf{Theorem~\ref{Theorem3}} holds true.
	\end{proof}
	\begin{remark}
		In \textbf{Theorem~\ref{Theorem3}}, one optimal solution to \textbf{P1} is presented, and there also exist  other optimal solutions. It is worth noting that the solution provided in \textbf{Theorem~\ref{Theorem3}} complies with the constraint in \textbf{P1} due to $  \Delta\left(\theta_p,\theta_b \right)  \in \left( 0,2\right) $. Furthermore, \textbf{Theorem~\ref{Theorem3}} indicates that we can simultaneously achieve the maximum primary and secondary  rates in the MA scenario, aligning with the concept of symbiosis. 
		This is due to the fact that the presented optimal positions of MAs ensure the maximum beam gain at angle $\theta_b$, {while the designed transmit beamforming vector guarantees the maximum beam gain at angle $\theta_p$.} 
		Notably, when $ \Delta\left(\theta_p,\theta_b \right) $ is small, the distance between adjacent antennas is much larger than $\lambda$, which results in a larger size of linear array.
	\end{remark}

	\subsection{Beam gain in the FPA scenario}\label{Section3-B}
	
		Different from the MA scenario, the distance between adjacent antennas is fixed in the FPA scenario. Thus, we have $x_{n+1} - x_{n} = d_\text{min} =\frac{\lambda}{2}$, where $1\le n \le N-1$. To simplify the analysis process, we make the assumption that $x_1 =\frac{\lambda}{2}$, and this assumption does not affect the analysis. Accordingly, the positions of FPA can be represented as $\bm x_F = [\frac{\lambda}{2},\ldots,n\frac{\lambda}{2},\ldots,N\frac{\lambda}{2}]^T$. Substituting $\bm x_F$ into~\eqref{MA_beamgain}, the beam gain at angle $\theta_b$ can be accordingly expressed as
		%For two distinct angles $\theta_p$ and $\theta_b$,\footnote{We omit the case $ \cos{\theta_p}=\cos{\theta_b} $, in which the steering vector of $\theta_p$ is identical to that of $\theta_b$.} we expand the expression of $  \kappa\left(\bm x_F,\theta_p,\theta_b  \right) $ as follows
	\begin{align}\label{Eq16}
	&\kappa\left(\bm x_F,\theta_p,\theta_b  \right) =  \left|\sum_{n=1}^{N} e^{j n\pi \Delta\left(\theta_p,\theta_b \right)  } \right|.
%	&=\left| \sum_{n=1}^{N} e^{-j\frac{2\pi}{\lambda} x_n \cos{\theta_b} }  e^{j\frac{2\pi}{\lambda} x_n \cos{\theta_p}}  \right| 
%	= \left|\sum_{n=1}^{N} e^{j n\pi \Delta\left(\theta_p,\theta_b \right)  } \right|  \nonumber 
	\end{align}
	\begin{theorem}\label{Theorem2}
		The beam gain in the FPA scenario satisfies $\kappa\left(\bm x_F,\theta_p,\theta_b  \right) < N$ when $  \Delta\left(\theta_p,\theta_b \right)  \in \left( 0,2\right) $ and $\theta_p \neq \theta_b$.
	\end{theorem}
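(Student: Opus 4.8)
The plan is to treat the beam gain in~\eqref{Eq16} as the modulus of a sum of $N$ unit-modulus phasors and to invoke the triangle inequality together with a precise analysis of its equality condition. Writing $\Delta = \Delta\left(\theta_p,\theta_b\right)$ for brevity, each summand $e^{jn\pi\Delta}$ has modulus one, so the triangle inequality immediately yields
\begin{align}
\kappa\left(\bm x_F,\theta_p,\theta_b\right) = \left|\sum_{n=1}^{N} e^{jn\pi\Delta}\right| \le \sum_{n=1}^{N}\left|e^{jn\pi\Delta}\right| = N.
\end{align}
It then remains only to rule out equality in order to obtain the strict bound.

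First I would recall that equality in the triangle inequality for unit-modulus complex numbers holds if and only if all the phasors coincide, i.e., $e^{j\pi\Delta} = e^{j2\pi\Delta} = \cdots = e^{jN\pi\Delta}$. Since consecutive terms differ by the common factor $e^{j\pi\Delta}$, this condition is equivalent to $e^{j\pi\Delta} = 1$, that is, $\pi\Delta \in 2\pi\mathbb{Z}$, or simply $\Delta$ being an even integer. Next I would confirm that this degenerate case is precluded by the hypotheses: the assumption $\theta_p \neq \theta_b$, together with the standing restriction $\cos\theta_p \neq \cos\theta_b$, guarantees $\Delta \neq 0$, while the range $\Delta \in \left(0,2\right)$ established just before \textbf{Theorem~\ref{Theorem3}} contains no even integer. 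Hence equality cannot be attained, and the strict inequality $\kappa\left(\bm x_F,\theta_p,\theta_b\right) < N$ follows.

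As an alternative that makes the gap quantitative, I could instead sum the geometric series explicitly, obtaining $\kappa\left(\bm x_F,\theta_p,\theta_b\right) = \left|\sin\!\left(N\pi\Delta/2\right)/\sin\!\left(\pi\Delta/2\right)\right|$, and then invoke the elementary inequality $\left|\sin\left(Nx\right)\right| < N\left|\sin x\right|$, valid for integer $N \ge 2$ whenever $\sin x \neq 0$. Since $\Delta \in \left(0,2\right)$ forces $\pi\Delta/2 \in \left(0,\pi\right)$ and hence $\sin\!\left(\pi\Delta/2\right) > 0$, this route delivers the same conclusion while exhibiting the explicit deficit $N - \kappa$.

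The proof is elementary, so the single delicate point—and the step I would treat most carefully—is the equality characterization: one must verify that no boundary value of $\Delta$ slips in an equality, which is exactly why the strictness of the interval $\Delta \in \left(0,2\right)$ and the exclusion of $\theta_p = \theta_b$ are indispensable rather than cosmetic hypotheses.
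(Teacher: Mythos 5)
Your proof is correct and follows essentially the same route as the paper's: the triangle inequality applied to the $N$ unit-modulus phasors in~\eqref{Eq16}, followed by the observation that the equality condition (all phasors coinciding, i.e., $\Delta\left(\theta_p,\theta_b\right)$ an even integer, which the paper phrases as $\Delta = 0$ or $2$) is excluded by $\Delta\left(\theta_p,\theta_b\right) \in \left(0,2\right)$. Your equality characterization is in fact stated more carefully than the paper's, and your alternative geometric-series argument via $\left|\sin\left(N\pi\Delta/2\right)/\sin\left(\pi\Delta/2\right)\right|$ is a valid bonus (quantifying the gap), but both are the same underlying idea; note only that, like the paper, you implicitly need $N \ge 2$, since for $N=1$ the bound $\kappa < N$ fails trivially.
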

	\begin{proof}	
		It can be readily observed that $ \left|\sum_{n=1}^{N} e^{j n\pi \Delta\left(\theta_p,\theta_b \right)  } \right|  \overset{(\varsigma)}{\le} \sum_{n=1}^{N}\left| e^{j n\pi \Delta\left(\theta_p,\theta_b \right)  } \right| = N $, and the equality holds if and only if $ \Delta\left(\theta_p,\theta_b \right)  =0$ or $2 $.  Given that  $  \Delta\left(\theta_p,\theta_b \right) \in (0,2)$, we have $\kappa (\bm x_F ,\theta_p ,\theta_b)<N$.
	%	See Appendix \eqref{ProofofTheorem2}.
	\end{proof}
	\begin{remark}
		\textbf{Theorem~\ref{Theorem2}} implies that, {in the FPA scenario, employing the designed transmit beamforming vector to maximize the beam gain at angle $\theta_p$ is not feasible to simultaneously achieve the highest beam gain at angle $\theta_b$.} In essence, ensuring the maximization of primary rate generally results in a relatively low secondary rate, which will be discussed in Section~\ref{section-IV}. While the proposed MA scheme can achieve both the optimal primary and secondary performance, which provides an opportunity for mutualism.
	\end{remark}

	\begin{figure*}[t]
		\centering
		\subfigure[$\Delta\left(\theta_p,\theta_b \right)=0.5$.]{\label{Delta1}
			\includegraphics[height=3.6cm]{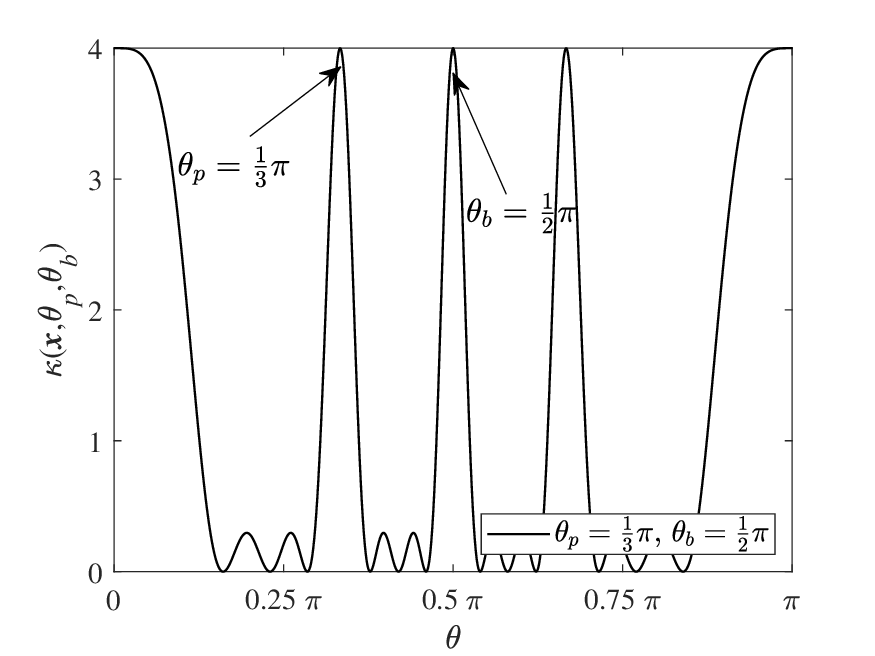}}
		\hspace{5pt}
		\subfigure[$\Delta\left(\theta_p,\theta_b \right)=1.2071$.]{\label{Delta2}
			\includegraphics[height=3.6cm]{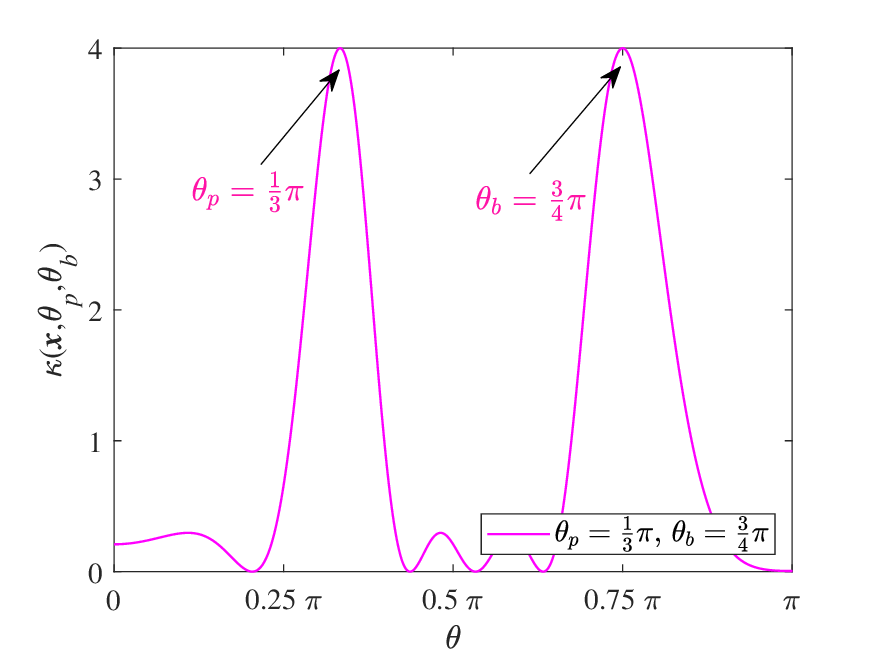}}
		\hspace{5pt}
		\subfigure[$\Delta\left(\theta_p,\theta_b \right)=1.4397$.]{\label{Delta3}
			\includegraphics[height=3.6cm]{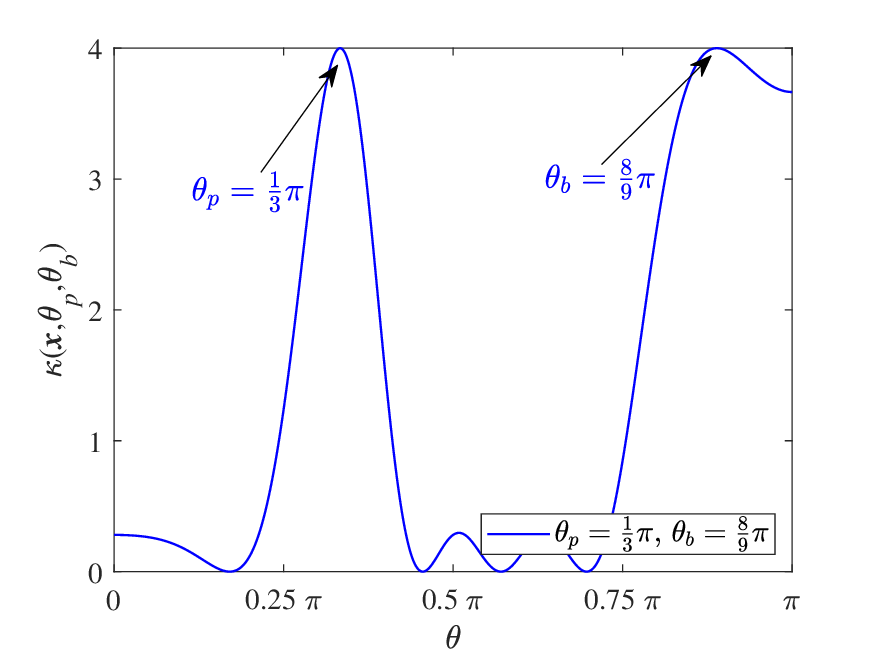}}
		\caption{Beam gain of the MA scheme with different $\Delta\left(\theta_p,\theta_b \right)$ in Table \ref{Tab1}.}
		\label{Delta}
		\vspace{-5pt}
	\end{figure*} 
	
	\section{Numerical Results}\label{section-IV}

	In this section, we present the simulation results to demonstrate the superiority of the proposed MA enabled SR system. {Specifically, we consider a two-dimensional simulation scenario, and the parameters are set as Table~\ref{SimulationParameters}.}

	\begin{table}[htb]\scriptsize
		\begin{center}{
		\caption{Simulation Parameters}
		\label{SimulationParameters}
		\begin{tabular}{c|c}
		\hline 
		\textbf{Parameter} & \textbf{Value}          
		\\ \hline Number of MAs, $N$ & $4$
		\\ \hline Distance from the PT to the PU, $d_p$, & $40$ m
		\\ \hline Distance from the PT to the BD, $d_b$, & $20 \sqrt{3}$ m
		\\ \hline Position of PU  & $\left(-d_p \cos\left(\theta_p \right), d_p \sin\left(\theta_p \right)\right)$  
		\\ \hline Position of BD  & $\left(-d_b \cos\left(\theta_b \right), d_b \sin\left(\theta_b \right)\right) $ 
		\\ \hline Carrier wavelength, $\lambda$ & $0.5$ m
		\\ \hline Spread factor, $L$ & $15$
		\\ \hline Noise power, $\sigma^2$ & $-80$ dBm
		\\ \hline
		\end{tabular}}
	\end{center}
\end{table}

\begin{table}[h]\scriptsize
	\centering  
	\renewcommand\arraystretch{2}
	\caption{The optimal positions of the MAs.}\label{Tab1}
	\resizebox{\linewidth}{!}{
		\begin{tabular}{|c|c|c|c|c|c|}
			\hline
			& $\Delta\left(\theta_p,\theta_b \right)$                        & $x_{1}$ & $x_{2}$         & $x_{3}$             & $x_{4}$           \\ \hline
			$\theta_p=\frac{1}{3}\pi$ and $\theta_b=\frac{1}{2}\pi$& 0.5     & 0       & 2$\lambda$       & 4$\lambda$         & 6$\lambda$        \\ \hline
			$\theta_p=\frac{1}{3}\pi$ and $\theta_b=\frac{3}{4}\pi$& 1.2071	 & 0       & 0.8284$\lambda$  & 1.6569$\lambda$    & 2.4853 $\lambda$  \\ \hline
			$\theta_p=\frac{1}{3}\pi$ and $\theta_b=\frac{8}{9}\pi$& 1.4397  & 0       &0.6946 $\lambda$  & 1.3892$\lambda$    &2.0838$\lambda$    \\ \hline
	\end{tabular}}
\end{table}

In Table~\ref{Tab1}, according to \textbf{Theorem~\ref{Theorem3}}, we present the optimal positions of MAs for various $\theta_p$ and $\theta_b$  with $N=4$.  As observed in Table~\ref{Tab1}, as the decrease of $\Delta\left(\theta_p,\theta_b \right)$ (i.e., the phase difference between  $\theta_p$ and $\theta_b$), the aperture of liner array becomes large. Even if $\Delta\left(\theta_p,\theta_b \right)$ is small, we can still reach both the maximum primary and secondary  rates, but at the cost of deploying  a linear array with a sufficiently large aperture.  
To validate the optimality of the positions of MAs as listed in Table~\ref{Tab1},  we present in Fig.~\ref{Delta} the beam gains for various $\Delta\left(\theta_p,\theta_b \right)$ using the proposed transmit beamforming scheme in~\eqref{MRT} and the optimal MA positions derived from \textbf{Theorem~\ref{Theorem3}}. As depicted in Fig.~\ref{Delta}, employing the proposed transmit beamforming scheme consistently results in the maximum beam gain at the PU (i.e., $\theta_p= \frac{1}{3}\pi$), ensuring attainment of the peak primary  rate. 
Furthermore, the proposed MA scheme can also achieve the upper bound beam gain at different angles $\theta_b$, thereby confirming the effectiveness of position optimization. The numerical results illustrated in Fig.~\ref{Delta} demonstrate that MA enabled SR can achieve maximal beam gains at both the PU and the BD, underscoring its potential to facilitate mutualism in SR systems.

 \begin{figure}[t]
 	\centering
 	\subfigure[$\theta_p=\frac{1}{2}\pi $, $\theta_p=\frac{2}{3}\pi$ and $N=4$.]{\label{Nt4BeamGain}
 		\includegraphics[height=3.08cm]{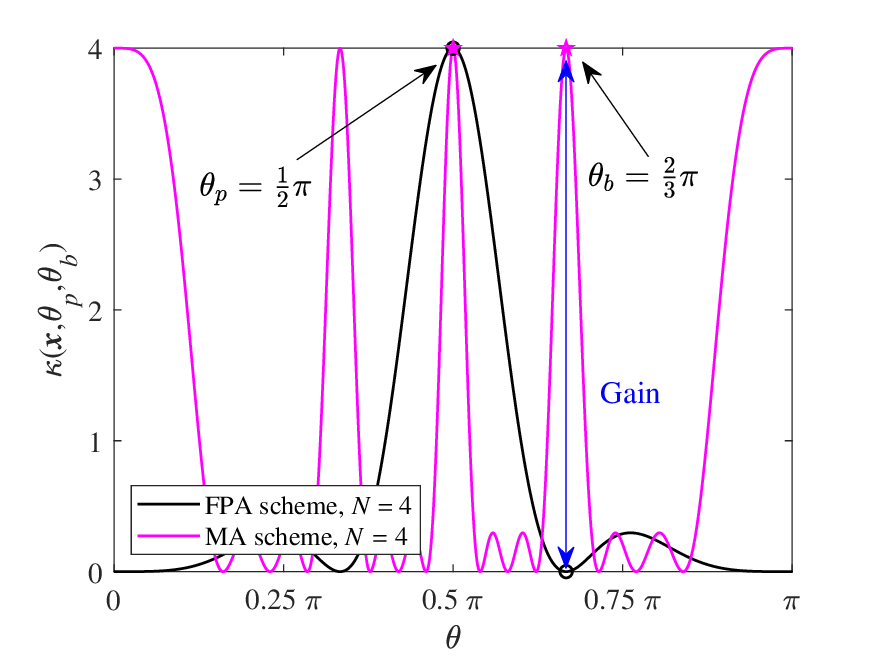}}
 	\hspace{5pt}
 	\subfigure[$\theta_p=\frac{1}{4}\pi $, $\theta_p=\frac{3}{4}\pi$ and $N=6$.]{\label{Nt6BeamGain}
 		\includegraphics[height=3.08cm]{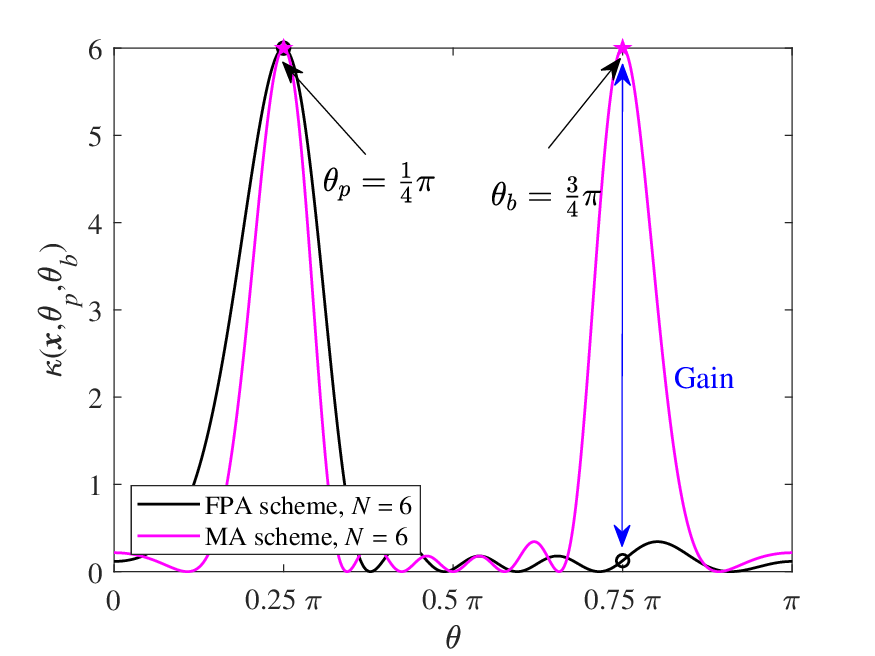}}
 	\caption{Beam gain with different $\Delta\left(\theta_p,\theta_b \right)$ and $N$.}
 	\label{BeamGaincomparison}
 	\vspace{-5pt}
 \end{figure}

	To illustrate the superiority of the proposed scheme, in Fig.~\ref{BeamGaincomparison}, we conducted a comparison of beam gains between the MA scheme and the FPA scheme for $N = 4$ and $N = 6$, respectively. In Fig.~\ref{Nt4BeamGain}, the beam gain with $\theta_p=\frac{1}{2}$, $\theta_b=\frac{2}{3}\pi$, and $N = 4$ is presented. It is evident that the proposed MA scheme achieves maximum beam gains in both the PU direction and BD direction, whereas the FPA scheme solely ensures maximum beam gain in the PU direction, with a beam gain of 0 in the BD direction. Therefore, the proposed MA scheme effectively enhances the performance of secondary transmission. Furthermore, the beam gain of the FPA scheme is consistently observed to be very low in most directions. For example, as depicted in Fig.~\ref{Nt4BeamGain}, the beam gain is less than 2 when $0<\theta<0.4\pi$ and $0.6\pi<\theta<\pi$, and similarly in Fig.~\ref{Nt6BeamGain}, it is lower than 2 when $0<\theta<0.12\pi$ and $0.34\pi<\theta<\pi$. This indicates that, in most cases, the performance of secondary transmission is seriously limited in the FPA scenario. Conversely, in the MA scenario, optimizing the positions of MA  can effectively enhance the beam gain and subsequently improve secondary transmission performance.
	%Moreover, from Fig.~\ref{Nt4BeamGain} and Fig.~\ref{Nt6BeamGain}, it is found that for the FPA scenario, increasing the number of antennas may not lead to a higher beam gain at angle $\theta_b$ 
	%due to an increased likelihood of encountering a beam gain of 0. However, for the MA scenario, the maximum beam gain increases with the number of antennas.

%	\begin{figure}
%		\centering
%		\includegraphics[width=0.7\linewidth]{Nt4.eps}
%		\caption{Beam gain with $\theta_p=\frac{1}{2}\pi $, $\theta_p=\frac{2}{3}\pi$ and $N=4$.}
%		\label{Nt4BeamGain}
%	\end{figure} 
%	
%	
%	\begin{figure}
%		\centering
%		\includegraphics[width=0.7\linewidth]{Nt6.eps}
%		\caption{Beam gain with $\theta_p=\frac{1}{4}\pi $, $\theta_p=\frac{3}{4}\pi$ and $N=6$.}
%		\label{Nt6BeamGain}
%	\end{figure}

	\begin{figure}
		\centering
		\includegraphics[width=0.6\linewidth]{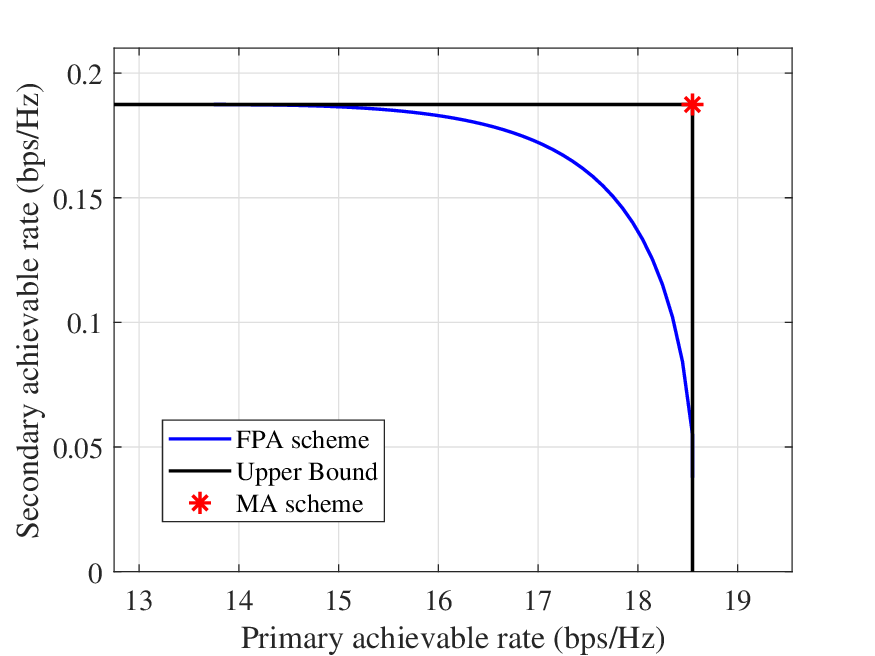}
		\caption{{The secondary achievable rate versus  
			the primary achievable rate with $\theta_p=\frac{1}{3}\pi$, $\theta_b=\frac{3}{4}\pi$ and $N=4$.}}
		\label{Rate-Region}
	\end{figure}
	\begin{figure}
		\centering
		\includegraphics[width=0.6\linewidth]{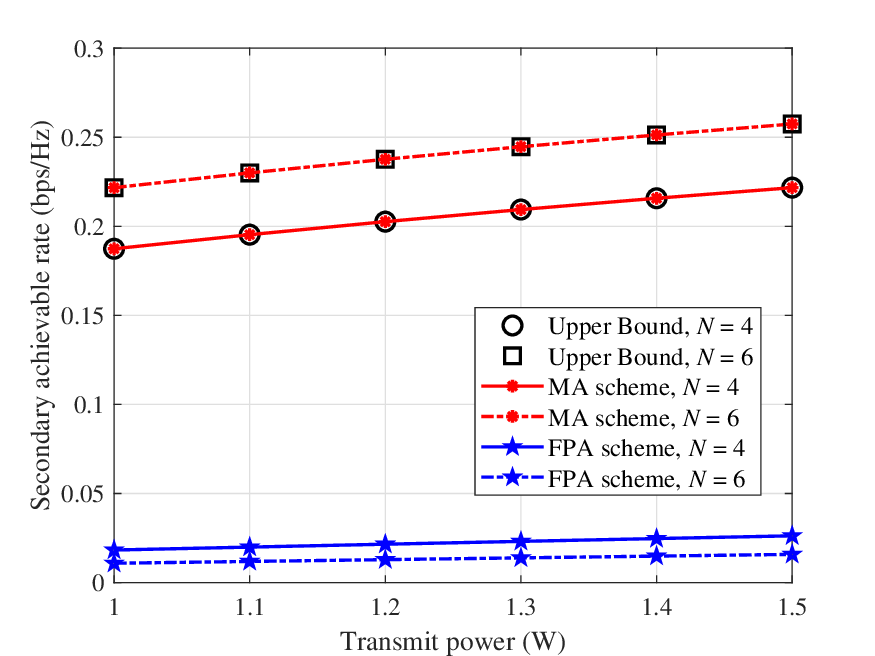}
		\caption{Secondary achievable rate versus the transmit power with $\theta_p=\frac{1}{3}\pi $ and $\theta_p=\frac{3}{4}\pi$.}
	\label{SR}
	\end{figure} 

	{In Fig.~\ref{Rate-Region}, the achievable rate region for primary and secondary transmissions is examined. The black solid line represents the maximum values (i.e., upper bounds) that the secondary and/or primary achievable rates can attain with the maximum allowable transmit power.
    It is evident that, with the FPA scheme, there exists a tradeoff between the performance of primary and secondary transmissions. Specifically, achieving satisfactory performance in secondary transmission may come at the cost of sacrificing primary transmission performance, and vice versa. Conversely, the proposed MA scheme can achieve optimal achievable rates for both primary and secondary transmissions, as indicated by the red asterisk in Fig.~\ref{Rate-Region}.}

	We investigate the secondary  rate versus the transmit power in Fig.~\ref{SR} with $\theta_p=\frac{1}{3}\pi$ and $\theta_b=\frac{3}{4}\pi$. Based on the results shown in Fig.~\ref{SR}, it can be observed that the secondary  rate increases for all schemes as the transmit power grows, indicating potential avenues for enhancing secondary transmission performance.  However, due to the low beam gain at the BD, increasing the transmit power in the FPA scenario yields only  marginal performance improvement for secondary transmission. The proposed MA scheme demonstrates the capability to achieve the upper bound of secondary transmission, surpassing the performance of the FPA scheme by approximately 11.4 times when $ N=6 $. This significant enhancement in secondary rate highlights the superiority of the proposed MA scheme.  Furthermore, it is observed that the secondary  rate can be enhanced by augmenting the number of antennas in the MA scheme, but it may be ineffective in the FPA scheme. {This phenomenon is attributed to the fact that the beam gain at angle $\theta_b$ in the FPA scheme, as defined by~\eqref{Eq16}, is not only dependent on the number of transmit antennas but also on $\Delta\left( \theta_p, \theta_b\right)$. However, in the MA scheme, the maximum beam gain increases with the number of antennas, as given by~\eqref{Eq15}.} 

   \section{Conclusion}

   In this letter, we proposed a novel MA enabled SR system, which holds promise for promoting mutualism of primary and secondary transmissions. Specifically, in this system,  the MRT beamforming  was adopted to ensure the optimal primary transmission performance. Based on this, the optimal positions of MAs were obtained with low complexity to achieve the highest secondary rate. Then, we proved the superiority of the proposed MA scheme over the FPA scheme through theoretical analysis, which was further validated by numerical results.

   %while controlling the positions of MAs to achieve an optimal secondary transmission performance, thereby guaranteeing both the highest primary and secondary rates. In contrast to the FPA scheme, the optimal primary and secondary transmission performance can always be achieved by deriving the optimal positions of MAs with low complexity. Numerical results validated the effectiveness of position optimization and demonstrated the superiority of our proposed scheme.

   %\appendix


% Generated by IEEEtran.bst, version: 1.14 (2015/08/26)
\begin{thebibliography}{}
\providecommand{\url}[1]{#1}
\csname url@samestyle\endcsname
\providecommand{\newblock}{\relax}
\providecommand{\bibinfo}[2]{#2}
\providecommand{\BIBentrySTDinterwordspacing}{\spaceskip=0pt\relax}
\providecommand{\BIBentryALTinterwordstretchfactor}{4}
\providecommand{\BIBentryALTinterwordspacing}{\spaceskip=\fontdimen2\font plus
\BIBentryALTinterwordstretchfactor\fontdimen3\font minus
  \fontdimen4\font\relax}
\providecommand{\BIBforeignlanguage}[2]{{%
\expandafter\ifx\csname l@#1\endcsname\relax
\typeout{** WARNING: IEEEtran.bst: No hyphenation pattern has been}%
\typeout{** loaded for the language `#1'. Using the pattern for}%
\typeout{** the default language instead.}%
\else
\language=\csname l@#1\endcsname
\fi
#2}}
\providecommand{\BIBdecl}{\relax}
\BIBdecl

\end{thebibliography}


\begin{thebibliography}{10}
	\bibliographystyle{IEEEtran}
	
	\bibitem{IoT_Num}
	D. C. Nguyen \emph{et al.}, ``6G Internet of things: A comprehensive survey," \emph{IEEE Internet Things J.}, vol. 9, no. 1, pp. 359-383, 1 Jan.1, 2022.
	
	\bibitem{SR_TCCN}
	Y. -C. Liang, Q. Zhang, E. G. Larsson, and G. Y. Li, ``Symbiotic radio: Cognitive backscattering communications for future wireless networks," \emph{IEEE Trans. Cogn. Commun. Netw.}, vol. 6, no. 4, pp. 1242-1255, Dec. 2020.
	
	\bibitem{SR_Li}
	X. Li \emph{et al.}, ``Physical-layer authentication for ambient backscatter-aided NOMA symbiotic systems," \emph{IEEE Trans. Commun.}, vol. 71, no. 4, pp. 2288-2303, April 2023.
		

	\bibitem{SR_BD}
	R. Long, Y. -C. Liang, H. Guo, G. Yang, and R. Zhang, ``Symbiotic radio: A new communication paradigm for passive internet of things," \emph{IEEE Internet  Things J.},  vol. 7, no. 2, pp. 1350-1363, Feb. 2020.
	
	
	\bibitem{SR_Zeng}
	J. Xu, Z. Dai, and Y. Zeng, ``MIMO symbiotic radio with massive backscatter devices: Asymptotic analysis and precoding optimization," \emph{IEEE Trans. Commun.}, vol. 71, no. 9, pp. 5487-5502, Sept. 2023.
	
	\bibitem{Multi-Acc_Liang}
	J. Wang, X. Ding, Q. Zhang, and Y. -C. Liang, ``Multiple access design for symbiotic radios: Facilitating massive IoT connections with cellular networks," \emph{IEEE Trans.  Wireless Commun.}, vol. 23, no. 1, pp. 201-216, Jan. 2024.
	
	\bibitem{PSR_Yang}
	G. Yang, T. Wei, and Y. -C. Liang, ``Joint hybrid and passive beamforming for millimeter wave symbiotic radio systems," \emph{IEEE Wireless Commun. lett.}, vol. 10, no. 10, pp. 2294-2298, Oct. 2021.
	
	\bibitem{PSR_Zeng}
	Z. Dai, R. Li, J. Xu, Y. Zeng, and S. Jin, ``Rate-region characterization and channel estimation for cell-free symbiotic radio Communications," \emph{IEEE Trans. Commun.}, vol. 71, no. 2, pp. 674-687, Feb. 2023.
		
	\bibitem{MA_mag}
	L. Zhu, W. Ma, and R. Zhang, ``Movable antennas for wireless communication: Opportunities and challenges,"  \emph{IEEE Commun. Mag.}, vol. 62, no. 6, pp. 114-120, June 2024.
		
	\bibitem{You_TransceiverDesign}
	C. You, \emph{et al.}, ``Next generation advanced transceiver technologies for 6G and beyond." arXiv:2403.16458, 2024.
	
	\bibitem{MA_Capacity}
	W. Ma, L. Zhu, and R. Zhang, ``MIMO capacity characterization for movable antenna systems," \emph{IEEE Trans. Wireless Commun.}, vol. 23, no. 4, pp. 3392-3407, April 2024.
	
	\bibitem{Secure_MA}
	G. Hu, Q. Wu, K. Xu, J. Si, and N. Al-Dhahir, ``Secure wireless communication via movable-antenna array," \emph{IEEE Signal Process. Lett.}, vol. 31, pp. 516-520, 2024.
	

	\bibitem{MA_MultiBeam}
	W. Ma, L. Zhu, and R. Zhang, ``Multi-beam forming with movable-antenna array," \emph{IEEE Commun. Lett.}, vol. 28, no. 3, pp. 697-701, March 2024.
	
{
	\bibitem{farfield_respose}
	Y. Liu, Z. Wang, J. Xu, C. Ouyang, X. Mu, and R. Schober, ``Near-field communications: A tutorial review," \emph{IEEE Open J. Commun. Soc.}, vol. 4, pp. 1999-2049, 2023.
	}
	
	
	
	%C. Feng and Y. Zeng, ``When does UPW model become invalid for XL-MIMO with directional array elements?," \emph{IEEE Commun. Lett.}, vol. 28, no. 2, pp. 422-426, Feb. 2024.
	


		
	\end{thebibliography}
\end{document}